\newcommand{\blind}{1}
\begin{document}

\def\spacingset#1{\renewcommand{\baselinestretch}%
{#1}\small\normalsize} \spacingset{1}


\if1\blind
{
  \title{\bf A Proper Concordance Index for Time-Varying Risk}
  \author{A. Gandy\\
    Department of Mathematics, Imperial College London, SW7 2AZ, U.K.\\ 
    a.gandy@imperial.ac.uk \\
    and \\
    T. J. Matcham \hspace{.2cm}\\
    Department of Mathematics, Imperial College London, SW7 2AZ, U.K.\\
    NIHR ARC Northwest London, SW10 9NH, U.K.\\
thomas.matcham14@imperial.ac.uk
    }
  \maketitle
} \fi

\if0\blind
{
  \bigskip
  \bigskip
  \bigskip
  \begin{center}
    {\LARGE\bf A Proper Concordance Index for Time-Varying Risk}
\end{center}
  \medskip
} \fi

\bigskip
\begin{abstract}

Harrel's concordance index is a commonly used discrimination metric for survival models, particularly for  models where the relative ordering of the risk of individuals is time-independent, such as the proportional hazards model. There are several suggestions, but no consensus, on how it could be extended to models where relative risk can vary over time, e.g.\ in case of crossing hazard rates. We show that these concordance indices are not proper, in the sense that they are maximised in the limit by the true data generating model. Furthermore, we show that a concordance index is proper if and only if the risk score used is concordant with the hazard rate at the first event time for each comparable pair of events. Thus, we suggest using the  hazard rate as the time-varying risk score when calculating concordance. Through simulations,  we demonstrate situations in which other concordance indices can lead to incorrect models being selected over a true model, justifying the use of our suggested risk prediction in both model selection and in loss functions in, e.g., deep learning models.

\end{abstract}

\noindent%
{\it Keywords:} Survival Discrimination metric; Crossing Hazards, Survival Loss Function
\vfill

\newpage
\spacingset{1.9} 
\section{Introduction}
\label{sec:intro}

Accurate patient prognosis estimation is an important clinical tool with applications including advising patients of their likely disease outcomes, informed selection of patient treatment as well as the design and evaluation of clinical trials. There exist several approaches to quantifying the predictive accuracy of survival models \citep{harrell1996multivariable}, which can in turn be optimized for, in order to improve a given aspect of the predictions. Discrimination metrics focus on a model's ability to correctly order the predictions of the patient outcomes. This could  be important, for example, in deciding the order in which a set of patients should be treated. 

The most significant metric of survival model discrimination is Harrel's concordance index \citep{Harrel:1984}, hereafter the C-index, which was first developed as an adaptation of the Kendall-Goodman-Kruskal-Somers type rank correlation index \citep{Goodman:1954} to right-censored survival data, similar to an adaptation of Kendall's $\tau$ by \citet{brown1973nonparametric} and \citet{schemper1984analyses}. 

We use the following setup. 
Let $(X_i, U_i, Z_i), i\in \mathbb{N}$, be independent and identically distributed with the lifetime $X_i$ and the right censoring time $U_i$ being non-negative random variables. Let the covariate $Z_i$ be an element of some  space $\mathcal{Z}$. We observe  $(T_i, D_i, Z_i),i=1,\dots,n$, where  $T_i=\min(X_i,U_i)$ is the time at risk  and $D_i=\mathbb{I}(X_i\leq U_i)$ is the event indicator.

The C-index estimates the probability that the predicted risk scores of a pair of individuals is concordant with that of their observed survival times. Only for pairs of individuals $(i,j)$, with $i\neq j$, for whom the first event is not a censoring event is an ordering of the outcome possible, i.e., the pair is comparable. The probability that individual $i$  has such an event occurring before event $j$ is $P(D_i=1,T_i<T_j)$. In this work we instead use
\begin{equation*}
    \pi_{comp} = P(D_i=1,T_i\leq T_j).
\end{equation*}
This has no effect on the continuous time case as $P(T_i=T_j)=0$, however it is critical for the discrete case in Section \ref{sec3}, where ties will be possible. 

Survival predictions are differentiated with functions of the covariate called risk scores. In situations where the relative risk of individual is not changing over time, e.g., in a proportional hazards model with only time-constant covariates, this is sufficient to discriminate between individuals. However, the risk score should arguably be time-dependent in situations where the relative risk of individuals changes over time, e.g., in cases where hazards of risk groups cross \citep{Mantel1988-qh}, where a proportional hazards model has time-dependent covariates, or  where the risk prediction is individual over time as in machine learning approaches to survival models \citep{Lee:2018}. Two situations in which we see crossing occur include when surgery or more aggressive medication incur a high initial hazard before eventually reducing the overall risk relative to the control group \citep{james2017abiraterone, rothwell1999prediction}.




Thus, the risk score  we use is allowed to depend on the covariate and on time. Specifically,  in a paired comparison, we compare the risk scores at the time when the first event occurs. Intuitively, this comparison gives a prediction of who was most at imminent risk of the event, given that they have survived until the first event time. This framework covers previous specific suggestions for dealing with time-varying risks \citep{Antolini:2005,blanche2019c,haider2020effective}.
Formally, the risk score is specified through a function $q:[0,\infty)\times {\cal Z}\to \mathbb{R}$ and for a given pair $(i,j)$, with $T_i\leq T_j$, we say that $i$ has a higher risk score than $j$ if  $q(T_i|Z_i)>q(T_i|Z_j)$. Higher values of the risk score indicate a propensity towards earlier events. 

For a pair $(i,j)$, where we observe that $i$ has occurred before j, i.e., $D_i=1,T_i\leq T_j$ we say that this pair is concordant if $q(T_i|Z_i)>q(T_i|Z_j)$. The probability of a pair having  observed the event of $i$ before the event of $j$ and  being concordant is 
\begin{equation}
\label{piconc_notiedpred}
 P(D_i=1, T_i\leq T_j, q(T_i|Z_i)>q(T_i|Z_j)).
\end{equation}
Defining a concordance index as (\ref{piconc_notiedpred}) divided by $\pi_{comp}$, would imply the following:
a perfect model that could correctly order every pair would have a concordance of 1,  a model that simply guesses for each pair would have a concordance of 0.5 on average, and a model that always orders incorrectly would have a concordance of 0. A model that gives the same prediction for each individual would also only get a concordance of 0, which seems undesirable.  

To avoid the latter, tied risk scores are often rewarded with a score of 0.5, such that a  model with  the same risk score for everyone would still score 0.5 \citep{harrell1996multivariable}. Hence,  in the 
C-index  $$C_q = \frac{\pi_{conc}}{\pi_{comp}}$$ we use
\begin{equation*}
\pi_{conc} = P[D_i=1,T_i\leq T_j,q(T_i|Z_i)>q(T_i|Z_j)] +  \frac{1}{2}P[D_i=1,T_i\leq T_j,q(T_i|Z_i)=q(T_i|Z_j)].
\end{equation*}

Given a random sample  $(T_i,D_i,Z_i)^n_{i=1}$ we can estimate $\pi_{conc}$ and $\pi_{comp}$ with:
\begin{align*}
\hat{\pi}_{conc} = \frac{1}{n(n-1)}\sum^n_{i=1}\sum^n_{j=1;j\neq i}&\{\mathbb{I}[D_i=1, T_i\leq T_j, q(T_i|Z_i)>q(T_i|Z_j)]\\&+\frac{1}{2}\mathbb{I}[D_i=1,T_i\leq T_j,q(T_i|Z_i)=q(T_i|Z_j)]\},\\
\hat{\pi}_{comp} = \frac{1}{n(n-1)}\sum^n_{i=1}\sum^n_{j=1;j\neq i}&\mathbb{I}(D_i=1,T_i\leq T_j)
\end{align*}
and thus estimate the C-index $C_q$ by  $$c ^n_q = \hat{\pi}_{conc}/ \hat{\pi}_{comp}.$$

Often, the risk score $q(t|z)$ being used is not dependent on the first argument $t$. For example, if a proportional hazards model with covariates $z$ is used, then often 
 the linear predictor $q(t|z) = z\hat\beta$ is used as risk score, where $\hat \beta$ is an estimate of the regression coefficient.

For more general survival models, where we have access to a survival function $S(t|z)$ as a function of the covariates $z$, a definition of a risk score is less obvious, as there may not be a clear definition of what constitutes higher risk, for example when the underlying hazard rates of individuals cross. Several methods of computing risk scores in this setting have been considered, for example 
$q(t|Z) = -S(t_0|Z)$, the negative of the survival function evaluated at some fixed time $t_0>0$ \citep{blanche2019c}, or
$q(t|Z)= -\inf\{t \text{   s.t   } S(t|Z) \leq 0.5\}$, the negative of the median survival time \citep{haider2020effective}. The negative is taken as predicted survival times have the opposite ordering to risk scores. Again, these suggestions do not depend on the first argument of $q$.

In the work of \citet{Antolini:2005}, a time-dependent concordance index, $C^{td}$,  is introduced. This adaptation of the C-index is developed for models with either time-varying covariates or time-varying effects, while supposing the predicted survival function is the 'natural' relative risk predictor. This leads to an event-time dependent risk score
\begin{equation*}    
q(t|z) = -S(t|z).
\end{equation*}
This index  is used widely in deep learning survival models, wherein the survival curves for distinct individuals are prone to crossing \citep{Zhong:2021}.  A similar concordance index has seen use in loss functions for deep survival models \citep{Lee:2018}. 

In Sections \ref{sec4} and \ref{sec5}, we show through examples  that these concordance indices do not always maximally reward correct models, and therefore could lead to selection of inferior predictive models. This is reinforced by the work of \cite{rindt2022survival}, wherein they show that $C^{td}$ and other metrics are not proper scoring rules.

The main contribution of this paper (Section \ref{sec2} and \ref{sec3}) is that  using the conditional hazard rate $\alpha(t|z)$  as a time-dependent risk score for an individual with covariates $z$ does behave analogously to a proper scoring rule. Since the definition of a proper scoring rule cannot be directly applied to the concordance index, we define a proper concordance index as $C_q$ defined above with a risk score $q$ such that
$$\forall \tilde{q}:[0,\infty)\times\mathcal{Z}\rightarrow\mathbb{R}: \quad  C_q\geq C_{\tilde{q}}.$$ Thus we suggest using $q(t|z)=\alpha(t|z)$ as risk score in concordance indices. Finally, we demonstrate in Section \ref{sec6} the advantage of using this risk score when training deep learning models, both as an element of the loss function, as well as in model validation.

\section{Continuous Event Time}
\label{sec2}

The following theorem shows that, the estimated concordance index $c_q^n$ converges in probability to the concordance  $C_q$ for any risk score 
$q:[0,\infty)\times\mathcal{Z}\to\mathbb{R}$ and  that the concordance is maximised iff the risk score is concordant with the hazard rate.

We assume  that  $X_i$ and $U_i$ are independent given $Z_i$, i.e., $X_i\perp \!\!\! \perp U_i\mid Z_i$,  that $X_i|Z_i$ has an absolutely continuous distribution, and  that there exists $\alpha:[0,\infty)\times\mathcal{Z}\rightarrow[0,\infty)$ such that the hazard rate of $X_i$ given $Z_i$ is $\alpha(t| Z_i)$. We also assume that $U_i\leq {\cal T}$ for some ${\cal T}\in \mathbb{R}$, i.e.\ that we have a finite observation window.

\newtheorem{theorem}{THEOREM}
\begin{theorem} \label{thm:1}
Under the continuous time set-up, if   $\pi_{comp}>0$ 
then
$$
c_q^n\overset{p}{\to} C_q \quad (n\to \infty).
$$
Furthermore, the following equivalence holds:\\
$C_q$ is a proper concordance index, i.e.,
$$\forall \tilde{q}:[0,\infty)\times\mathcal{Z}\rightarrow\mathbb{R}: \quad  C_q\geq C_{\tilde{q}}$$
if and only if for $i\neq j$:

\begin{equation}
    \label{eq:riskhazconcordant}
    \begin{split}
    E\int_0^{\tau_{ij}}
    \{&\mathbb{I}[q(s\vert Z_i)\geq q(s\vert Z_j), \alpha(s\vert Z_i)<\alpha(s\vert Z_j)] \\
    &+
    \mathbb{I}[q(s\vert Z_i)\leq q(s\vert Z_j), \alpha(s\vert Z_i)>\alpha(s\vert Z_j)]
    \}ds=0.
    \end{split}
\end{equation}
where $\tau_{ij}=T_i \wedge T_j$.
\end{theorem}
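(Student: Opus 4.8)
The plan is to split the statement into two parts: the convergence $c_q^n \overset{p}{\to} C_q$, and the characterization of properness.

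For the convergence, I would first observe that $\hat\pi_{conc}$ and $\hat\pi_{comp}$ are each $U$-statistics of order $2$ with bounded (indeed $\{0,\tfrac12,1\}$-valued) kernels, built from the i.i.d.\ sample $(T_i,D_i,Z_i)$. The kernels are not symmetric, but symmetrizing them changes nothing, so the standard $L^2$ (or a.s.) law of large numbers for $U$-statistics gives $\hat\pi_{conc}\overset{p}{\to}\pi_{conc}$ and $\hat\pi_{comp}\overset{p}{\to}\pi_{comp}$. Since $\pi_{comp}>0$, the continuous mapping theorem applied to the ratio yields $c_q^n=\hat\pi_{conc}/\hat\pi_{comp}\overset{p}{\to}\pi_{conc}/\pi_{comp}=C_q$. (One small point to check: the event $\{\hat\pi_{comp}=0\}$ has probability tending to $0$, so the ratio is well-defined with probability tending to $1$.)

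For the properness characterization, the key is to rewrite $\pi_{conc}$, and more specifically the gap $\pi_{comp}\cdot(\tfrac12) \;-\;(\text{something})$, as an integral against the hazard. Condition on $(Z_i,Z_j)$ and on the censoring. Because $X_i\perp\!\!\!\perp U_i\mid Z_i$ and $X_i\mid Z_i$ is absolutely continuous with hazard $\alpha(\cdot\mid Z_i)$, the probability that $i$ is the first to have a (true) event at time $s$, with $j$ still at risk, can be written using the sub-density $\alpha(s\mid Z_i)S_i(s)S_j(s)$ where $S_k(s)=P(X_k>s\mid Z_k)$ — and crucially, for a comparable pair we only ever compare risk scores $q(T_i\mid Z_i)$ vs $q(T_i\mid Z_j)$ at the \emph{first event time} $\tau_{ij}=T_i\wedge T_j$. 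I would therefore write, for each ordered pair,
$$
\pi_{conc} \;=\; E\!\int_0^{\tau_{ij}} \Big(\mathbb{I}[q(s\mid Z_i)>q(s\mid Z_j)]+\tfrac12\,\mathbb{I}[q(s\mid Z_i)=q(s\mid Z_j)]\Big)\,\alpha(s\mid Z_i)\,ds,
$$
and symmetrically the ``anti-concordant'' mass with $\alpha(s\mid Z_j)$. Adding the $i\leftrightarrow j$ version, the total comparable mass $\pi_{comp}$ integrates $\alpha(s\mid Z_i)+\alpha(s\mid Z_j)$ over $[0,\tau_{ij}]$, so $2\pi_{conc}-\pi_{comp}$ becomes $E\int_0^{\tau_{ij}} \mathrm{sgn}\big(q(s\mid Z_i)-q(s\mid Z_j)\big)\big(\alpha(s\mid Z_i)-\alpha(s\mid Z_j)\big)\,ds$. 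This integrand is pointwise maximized (over all choices of $q$) exactly when the sign of $q(s\mid Z_i)-q(s\mid Z_j)$ agrees with the sign of $\alpha(s\mid Z_i)-\alpha(s\mid Z_j)$ whenever the latter is nonzero; the deficit from that maximum is precisely twice the left-hand side of \eqref{eq:riskhazconcordant}. Hence $C_q = \tfrac12 + \tfrac{1}{2\pi_{comp}}\big(\text{that integral}\big)$ is maximized over all $\tilde q$ iff the deficit is zero, which is exactly condition \eqref{eq:riskhazconcordant}. Taking $\tilde q=\alpha$ shows the maximum is attained, so ``proper'' is equivalent to achieving it.

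The main obstacle I anticipate is the bookkeeping in the first display: carefully justifying the reduction of the joint event $\{D_i=1,\,T_i\le T_j,\,q(T_i\mid Z_i)>q(T_i\mid Z_j)\}$ to an integral of $\alpha(s\mid Z_i)$ against the survival functions, and making sure the treatment of the boundary $T_i=T_j$ (measure zero here, but it explains why $\leq$ rather than $<$ is used) and of censored pairs is airtight. Once $\pi_{conc}$ and $\pi_{comp}$ are in this integral form, the optimization step is an elementary pointwise argument: for fixed $s$ and fixed $(Z_i,Z_j)$, among the three admissible values of the ``reward'' coefficient ($1$, $\tfrac12$, $0$) the one maximizing $(\text{coeff})\cdot\alpha(s\mid Z_i) + (1-\text{coeff})\cdot\alpha(s\mid Z_j)$ is determined by the sign of $\alpha(s\mid Z_i)-\alpha(s\mid Z_j)$, and the shortfall is what appears inside the expectation in \eqref{eq:riskhazconcordant}.
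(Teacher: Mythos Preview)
Your proposal is correct and follows the same overall architecture as the paper: a $U$-statistic law of large numbers for the convergence, then an integral representation of $\pi_{conc}$ against the hazard $\alpha(\cdot\mid Z_i)$ over $[0,\tau_{ij}]$, followed by a pointwise optimization showing the maximum is attained exactly when the risk score and hazard are concordant. The one substantive difference is how that integral representation is obtained. The paper packages it as a separate lemma using counting-process machinery: it identifies the Doob--Meyer compensator of $N_{ij}^{conc}(t)$ with respect to the observed filtration, shows the remainder is a genuine (not just local) martingale, and then reads off $E[N_{ij}^{conc}(\mathcal T)]=E[\Lambda_{ij}^{conc}(\mathcal T)]$. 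You instead propose to get the same identity by direct conditioning on $(Z_i,Z_j)$ and the censoring variables and writing the sub-density of the first event. Your route is more elementary and avoids the local-martingale bookkeeping; the paper's route is more modular (the same lemma is reused, with sums replacing integrals, for the discrete-time theorem) and makes the role of predictability and the at-risk indicator $Y_{ij}$ explicit. Your final reformulation via $\mathrm{sgn}(q_i-q_j)(\alpha_i-\alpha_j)$ is equivalent to the paper's case analysis on $f_\alpha(s)-f_q(s)$, up to a harmless constant (your displayed $2\pi_{conc}-\pi_{comp}$ is actually $\tfrac12$ of that integral, so the coefficient in your formula for $C_q$ should be $1/(4\pi_{comp})$ rather than $1/(2\pi_{comp})$).
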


Equation (\ref{eq:riskhazconcordant}) is trivially satisfied if $q=\alpha$, which is why we suggest using the  hazard rate as the risk score.
More generally, (\ref{eq:riskhazconcordant}) is satisfied if the risk score $q$ and the hazard rate $\alpha$ are concordant in the sense that $\forall s\in[0,\infty), z_1,z_2\in\mathcal{Z}: q(s|z_1)>q(s|z_2)\iff\alpha(s| z_1)>\alpha(s| z_2)$.

To show Theorem \ref{thm:1},  we need to introduce some  counting process notation.
$$  N_{ij}^{comp}(t)= \mathbb{I}(T_i \leq t, D_i=1,T_i\leq T_j) $$
indicates if the event for  $i$ is known to have occurred before the event for  $j$ by time $t$.
The counting process $N_{ij}^{conc,1}(t)$ indicates if additionally the risk scores are in line with $i$ occurring before $j$, i.e.,
$$
    N_{ij}^{conc,1}(t) = N_{ij}^{comp}(t)\cdot \mathbb{I}[q(T_i|Z_i)>q(T_i|Z_j)]
$$
and $N_{ij}^{conc,2}(t)$ indicates if additionally the risk scores for $i$ and $j$ are tied, i.e., 
$$
    N_{ij}^{conc,2}(t) = N_{ij}^{comp}(t)\cdot \mathbb{I}[q(T_i|Z_i)=q(T_i|Z_j)].
$$
$N_{ij}^{conc}(t)$ adds these two together, with tied predictions instead contributing 1/2, i.e., 
$$
    N_{ij}^{conc}(t)=N_{ij}^{conc,1}(t) +\frac{1}{2}N_{ij}^{conc,2}(t).
$$
Based on the above, we now  define the concordance of $n$ individuals using information up to time t as
\begin{equation}\label{conc16}
    c^n_q(t) = \frac{\sum^{n}_{i=1}\sum^n_{j=1,j\neq i}N_{ij}^{conc}(t) }{\sum_{i=1}^n\sum_{j=1,j\neq i}^nN_{ij}^{comp}(t) }.
\end{equation}
We have $c_q^n=c_q^n({\cal T})$.

The following lemma derives the compensator of $N_{ij}^{conc}$ with respect to the filtration $({\cal F}_t)_t$, 
where 
$\mathcal{F}_t=\sigma(Z_i, \mathbb{I}(T_i\leq s), \mathbb{I}(T_i\leq s, D_i=1), i\in \mathbb{N}, 0\leq s\leq t)$
is the information observed up to time $t$.

\newtheorem{lemma}{LEMMA}
\begin{lemma} \label{thm:lem1}
$N_{ij}^{conc}(t)$ has a unique decomposition into a martingale  $M_{ij}^{conc}(t)$ and compensator $$\Lambda_{ij}^{conc}(t)=\int_0^tY_{ij}(s)[Q_{ij}^1(s)+ \frac{1}{2}Q_{ij}^2(s)]\alpha(s|Z_i)ds,$$
where 
$     Y_{ij}(t) = \mathbb{I}(\tau_{ij} \geq t)$,  $\tau _{ij}=T_i\wedge T_j$, 
$  Q_{ij}^1(t) = \mathbb{I}[q^{\tau_{ij}}(t| Z_i)>q^{\tau_{ij}}(t| Z_j)]$, 
$     Q_{ij}^2(t) = \mathbb{I}[q^{\tau_{ij}}(t| Z_i)=q^{\tau_{ij}}(t| Z_j)]$, and  
$q^{\tau_{ij}}(t|\cdot)=q(t\wedge \tau_{ij}|\cdot)$.
\end{lemma}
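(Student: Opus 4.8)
The plan is to express $N_{ij}^{conc}$ as a stochastic integral with respect to the elementary event-counting process of individual $i$, and then read off the compensator from the classical Aalen intensity. Write $N_i(t)=\mathbb{I}(T_i\le t,\,D_i=1)$, which is an $(\mathcal F_t)$-adapted pure-jump process with a single jump, of size $1$, at $T_i$ on $\{D_i=1\}$. Since $N_{ij}^{conc}$ can likewise only jump at $T_i$ and only when $D_i=1$, I would first verify the pathwise identity
\begin{equation*}
N_{ij}^{conc}(t)=\int_0^t H_{ij}(s)\,dN_i(s),\qquad H_{ij}(s)=\mathbb{I}(T_j\ge s)\Big(\mathbb{I}[q(s|Z_i)>q(s|Z_j)]+\tfrac12\mathbb{I}[q(s|Z_i)=q(s|Z_j)]\Big),
\end{equation*}
by evaluating both sides at the single jump time $s=T_i$: there $\mathbb{I}(T_j\ge T_i)=\mathbb{I}(T_i\le T_j)$ reproduces the $N_{ij}^{comp}$ factor, and the bracketed term reproduces exactly the weight attached in $N_{ij}^{conc,1}+\tfrac12 N_{ij}^{conc,2}$. (In continuous time the distinction between $T_i\le T_j$ and $T_i<T_j$ is immaterial on $\{D_i=1\}$, where $T_i=X_i$ is absolutely continuous and, by i.i.d.-ness of the triples, independent of $T_j$ given $(Z_i,Z_j)$, so $P(T_i=T_j)=0$.)

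Next I would invoke the standard counting-process fact that, under $X_i\perp\!\!\!\perp U_i\mid Z_i$ with $Z_i$ observed at time $0$, the $(\mathcal F_t)$-compensator of $N_i$ is $\int_0^t\mathbb{I}(T_i\ge s)\,\alpha(s|Z_i)\,ds$; the passage from the single-individual internal filtration to the joint filtration $(\mathcal F_t)$ leaves this unchanged because the triples $(X_i,U_i,Z_i)$ are independent across $i$ (see, e.g., the counting-process monograph of Andersen, Borgan, Gill and Keiding). Thus $M_i:=N_i-\int_0^{\cdot}\mathbb{I}(T_i\ge s)\alpha(s|Z_i)\,ds$ is an $(\mathcal F_t)$-martingale. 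The integrand $H_{ij}$ is predictable: the factor $\mathbb{I}(T_j\ge s)$ is adapted and left-continuous, while $s\mapsto\mathbb{I}[q(s|Z_i)>q(s|Z_j)]$ and $s\mapsto\mathbb{I}[q(s|Z_i)=q(s|Z_j)]$ are $\mathcal B[0,\infty)\otimes\mathcal F_0$-measurable (joint measurability of $q$, with $Z_i,Z_j\in\mathcal F_0$), hence predictable. Since $H_{ij}$ is also bounded by $1$, $\int_0^{\cdot}H_{ij}\,dM_i$ is an $(\mathcal F_t)$-martingale, and therefore
\begin{equation*}
M_{ij}^{conc}(t):=N_{ij}^{conc}(t)-\int_0^t H_{ij}(s)\,\mathbb{I}(T_i\ge s)\,\alpha(s|Z_i)\,ds
\end{equation*}
is an $(\mathcal F_t)$-martingale.

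It remains to identify this integral with $\Lambda_{ij}^{conc}$ and to argue uniqueness. Using $\mathbb{I}(T_j\ge s)\mathbb{I}(T_i\ge s)=\mathbb{I}(\tau_{ij}\ge s)=Y_{ij}(s)$ and the fact that on $\{s\le\tau_{ij}\}$ one has $q(s|\cdot)=q(s\wedge\tau_{ij}|\cdot)=q^{\tau_{ij}}(s|\cdot)$, the bracketed term times $Y_{ij}(s)$ equals $Y_{ij}(s)\big(Q^1_{ij}(s)+\tfrac12 Q^2_{ij}(s)\big)$, so the integral above is precisely $\Lambda_{ij}^{conc}(t)$ — this reduction to the stochastic interval $[0,\tau_{ij}]$ is exactly what allows one to write the compensator in terms of the stopped risk score $q^{\tau_{ij}}$. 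Finally, $N_{ij}^{conc}$ is nondecreasing and bounded by $1$, hence a submartingale of class D, so Doob--Meyer yields a \emph{unique} decomposition into a martingale plus a predictable nondecreasing process; since $\Lambda_{ij}^{conc}$ is continuous (hence predictable) and nondecreasing (as $\alpha\ge 0$), it is that compensator and $M_{ij}^{conc}$ the corresponding martingale. I expect the only genuinely delicate points to be the predictability of the risk-comparison integrand when $q(\cdot|z)$ is assumed merely measurable in its time argument, handled by viewing it as a time-static, jointly measurable random function, and the bookkeeping that reduces everything to $[0,\tau_{ij}]$; the remainder is a routine appeal to martingale theory for counting processes.
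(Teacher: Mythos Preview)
Your proposal is correct and follows essentially the same strategy as the paper: both express $N_{ij}^{conc}$ as a bounded predictable integral against the basic one-jump counting process $N_i$ and read off the compensator from its Aalen intensity, then argue uniqueness of the decomposition. The only notable difference is that the paper works with $M_i$ as a \emph{local} martingale throughout and at the end explicitly upgrades $M_{ij}^{conc}$ to a true martingale via the bound $E[\sup_{s\le t}|M_{ij}^{conc}(s)|]\le 1+\tfrac32\,E[H(X_i\mid Z_i)]=\tfrac52$ (using that the integrated hazard at the event time is unit-mean exponential), whereas you short-circuit this by citing that $M_i$ is already a true martingale and that bounded predictable integration preserves the property, and handle uniqueness via Doob--Meyer for the class~D submartingale $N_{ij}^{conc}$.
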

The proof of this lemma can be found in the Appendix.

\begin{proof}[Proof of Theorem \ref{thm:1}]
$\hat \pi_{conc}$ can be written as a U-statistic
$$
    \hat \pi_{conc} = \frac{1}{n(n-1)}\sum_{i=1}^{n-1}\sum_{j=i+1}^nh[(T_i,D_i,Z_i),(T_j,D_j,Z_j)]
$$with the kernel 
\begin{equation*}
    h[(T_i,D_i,Z_i),(T_j,D_j,Z_j)] = N_{ij}^{conc}({\cal T})+N_{ji}^{conc}({\cal T}).
\end{equation*}The kernel $h$ is bounded, implying $Eh^2[(T_i,D_i,Z_i),(T_j,D_j,Z_j)]<\infty$, and thus Theorem 12.3 of 
 \citet{vaart:1998}  shows that $\hat \pi_{conc}$ is asymptotically normal  as $n\rightarrow \infty$ with mean $\frac{1}{2} E[N_{ij}^{conc}(t)+N_{ji}^{conc}(t)]=E[N_{ij}^{conc}({\cal T})]=\pi_{conc}$. 
 Thus, we have $\hat \pi_{conc}\overset{p}{\to}\pi_{conc}$ as $n\to \infty$.
Similarly, we can show $\hat \pi_{comp}\overset{p}{\to} \pi_{comp}$ as $n\to \infty$.
Hence, by the assumption $\pi_{comp}>0$, we have $c_n^q=\hat\pi_{conc}/\hat \pi_{comp}\overset{p}{\to}\pi_{conc}/\pi_{comp}=C_q$ as $n\to \infty$.

Our choice of $q$ has no influence on the denominator, so considering the numerator only we find that, using Lemma \ref{thm:lem1},
\begin{align*}
   2  E[N_{ij}^{conc}(t)] =&E[N_{ij}^{conc}(t)+N_{ji}^{conc}(t)] =E[\Lambda_{ij}^{conc}(t)+\Lambda_{ji}^{conc}(t)] + E[M_{ij}^{conc}(t)+M_{ji}^{conc}(t)]\nonumber\\
    =&E[\Lambda_{ij}^{conc}(t)+\Lambda_{ji}^{conc}(t)] + 0
    = E\int_0^tf_q(s)Y_{ij}(s)ds\nonumber    
    = E\int_0^{t\wedge\tau_{ij}} f_q(s) ds,
\end{align*}
where 
\begin{align*}
    f_q(s)=&\alpha(s|Z_i)\mathbb{I}[q(s|Z_i)>q(s|Z_j)]+\alpha(s|Z_j)\mathbb{I}[q(s|Z_i)<q(s|Z_j)]\\
           & +0.5[\alpha(s|Z_i)+ \alpha(s|Z_j)]\mathbb{I}[q(s|Z_i)=q(s|Z_j)].
\end{align*}

Let $F_q=E\int_0^{\tau_{ij}} f_q(s) ds$ and let 
\begin{align*}
    A_q(s)=&\mathbb{I}[q(s\vert Z_i)\geq q(s\vert Z_j), \alpha(s\vert Z_i)<\alpha(s\vert Z_j)] +\\
    &\mathbb{I}[q(s\vert Z_i)\leq q(s\vert Z_j), \alpha(s\vert Z_i)>\alpha(s\vert Z_j)].
\end{align*}
Then, for any $q$, 
\begin{align*}
F_\alpha-F_q=E\int_0^{\tau_{ij}}[f_\alpha(s)-f_q(s)]ds
=E\int_0^{\tau_{ij}}[f_\alpha(s)-f_q(s)]A_q(s)ds,
\end{align*}
as $f_q(s)=f_\alpha(s)$ if $A_q(s)=0$. The latter can be seen by going through the three cases $\alpha(s\vert Z_i)>\alpha(s\vert Z_j)$, $\alpha(s|Z_i)<\alpha(s|Z_j)$ and $\alpha(s|Z_i)=\alpha(s|Z_j)$. Furthermore,  $A_q(s)=1$ implies $f_\alpha(s)>f_q(s)$. Thus,
$F_\alpha \geq F_q$ and $F_\alpha=F_q$ if and only if $E \int_0^{\tau_{ij}} A_q(s)ds=0$.
\end{proof}

\section{Discrete Event Time}
\label{sec3}
We show that an analogous result to Theorem \ref{thm:1}  holds for discrete time data. To show the result we need to treat  pairs of events with tied event times ($T_i=T_j, i\neq j$) as comparable.

Suppose that the possible event and censoring times $X_i$ and $U_i$ are discrete random variables over the positive integers $\mathbb{N}^+$.  We denote the discrete hazard rate by $\alpha(t| Z_i)=P(X_i=t\vert X_i \geq t, Z_i)$. As before, we assume  $X_i\perp\!\!\!\perp U_i\mid Z_i$ and that there is a finite observation window ensured by $U_i\leq\cal{T}$ for some $\cal{T}\in \mathbb{N}^+$. The definitions of $T_i$, $D_i$, $\pi_{comp}$,  $c_q^n$ and $C_q$ are as in the previous sections and risk scores are now defined as 
$q:\mathbb{N}^+\times\mathcal{Z}\rightarrow[0,1]$.

\begin{theorem} \label{thm:2}
Under the discrete time set-up, if   $\pi_{comp}>0$ 
then
$$
c_q^n\overset{p}{\to} C_q \quad (n\to \infty).
$$
Furthermore, the following equivalence holds:\\
$C_q$ is a proper concordance index, i.e.,
$$\forall \tilde{q}:\mathbb{N}^+\times\mathcal{Z}\rightarrow[0,1]: \quad  C_q\geq C_{\tilde{q}}$$
if and only if for $i\neq j$:

\begin{equation}
    \label{eq:riskhazconcordant_discrete}
    \begin{split}
    E\sum_{s=1}^{\tau_{ij}}
    \{&\mathbb{I}[q(s\vert Z_i)\geq q(s\vert Z_j), \alpha(s\vert Z_i)<\alpha(s\vert Z_j)] \\
    &+
    \mathbb{I}[q(s\vert Z_i)\leq q(s\vert Z_j), \alpha(s\vert Z_i)>\alpha(s\vert Z_j)]
    \}=0.
    \end{split}
\end{equation}
where $\tau_{ij}=T_i \wedge T_j$.
\end{theorem}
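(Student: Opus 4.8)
The plan is to follow the two parts of the proof of Theorem~\ref{thm:1}, replacing the integral against the hazard by a sum over $\mathbb{N}^+$ and the continuous Doob--Meyer decomposition by the discrete Doob decomposition, while keeping track of the fact that ties between event times now have positive probability. The convergence $c_q^n\overset{p}{\to}C_q$ needs no new work: as before, $\hat\pi_{conc}$ and $\hat\pi_{comp}$ are order-two U-statistics, now with kernels $N_{ij}^{conc}(\mathcal{T})+N_{ji}^{conc}(\mathcal{T})$ and $\mathbb{I}(D_i=1,T_i\le T_j)+\mathbb{I}(D_j=1,T_j\le T_i)$; both are bounded, hence square-integrable, so Theorem~12.3 of \citet{vaart:1998} gives $\hat\pi_{conc}\overset{p}{\to}\pi_{conc}$ and $\hat\pi_{comp}\overset{p}{\to}\pi_{comp}$, and since $\pi_{comp}>0$ the continuous mapping theorem yields the claim. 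This is verbatim the earlier argument.

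The substantive step is a discrete analogue of Lemma~\ref{thm:lem1}. With respect to the discrete filtration $(\mathcal{F}_t)_{t\in\mathbb{N}^+}$, the Doob decomposition gives $N_{ij}^{conc}(t)=M_{ij}^{conc}(t)+\Lambda_{ij}^{conc}(t)$ with $M_{ij}^{conc}$ a martingale, $M_{ij}^{conc}(0)=0$, and $\Lambda_{ij}^{conc}(t)=\sum_{s=1}^{t}E[N_{ij}^{conc}(s)-N_{ij}^{conc}(s-1)\mid\mathcal{F}_{s-1}]$. I would evaluate the summand by observing: (i) $N_{ij}^{conc}$ increases only at $s=T_i$ on the event $\{D_i=1,\,T_i\le T_j\}$; (ii) on $\{\tau_{ij}\ge s\}\in\mathcal{F}_{s-1}$ the increment equals $\mathbb{I}(X_i=s,\,U_i\ge s)\,[Q_{ij}^1(s)+\tfrac12 Q_{ij}^2(s)]$, and the bracket is $\mathcal{F}_{s-1}$-measurable because on $\{Y_{ij}(s)=1\}$ the stopping at $\tau_{ij}$ is inactive and it then depends only on $Z_i,Z_j$ and the deterministic index $s$; (iii) on $\{\tau_{ij}\ge s\}$ one has $T_j\ge s$, so $T_i=s\le T_j$ holds automatically --- this is precisely where treating tied event times as comparable (the ``$\le$'' in $\pi_{comp}$) is needed, and it is also why at a tie $T_i=T_j=s$ with $D_i=D_j=1$ both $N_{ij}^{conc}$ and $N_{ji}^{conc}$ jump at $s$; (iv) $E[\mathbb{I}(X_i=s,U_i\ge s)\mid\mathcal{F}_{s-1}]=P(X_i=s\mid X_i\ge s,Z_i)=\alpha(s\mid Z_i)$ on $\{\tau_{ij}\ge s\}$, by $X_i\perp\!\!\!\perp U_i\mid Z_i$ and the definition of the discrete hazard. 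Together these give $\Lambda_{ij}^{conc}(t)=\sum_{s=1}^{t}Y_{ij}(s)\,[Q_{ij}^1(s)+\tfrac12 Q_{ij}^2(s)]\,\alpha(s\mid Z_i)$.

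Given this compensator, the characterization proceeds exactly as in Theorem~\ref{thm:1}. Taking expectations, using $E[M_{ij}^{conc}(\mathcal{T})]=0$ and $\tau_{ij}\le U_i\le\mathcal{T}$ (so that $\sum_{s=1}^{\mathcal{T}}Y_{ij}(s)g(s)=\sum_{s=1}^{\tau_{ij}}g(s)$ for any $g$), one gets $2\pi_{conc}=E\sum_{s=1}^{\tau_{ij}}f_q(s)$ with the same $f_q$ as in the continuous proof. Writing $F_q=E\sum_{s=1}^{\tau_{ij}}f_q(s)$ and letting $A_q(s)$ denote the summand of~(\ref{eq:riskhazconcordant_discrete}), the pointwise facts that $f_q(s)=f_\alpha(s)$ when $A_q(s)=0$ and $f_\alpha(s)>f_q(s)$ when $A_q(s)=1$ --- verified by the same three-case split on the ordering of $\alpha(s\mid Z_i)$ and $\alpha(s\mid Z_j)$ --- give $F_\alpha-F_q=E\sum_{s=1}^{\tau_{ij}}[f_\alpha(s)-f_q(s)]A_q(s)\ge 0$, with equality if and only if $E\sum_{s=1}^{\tau_{ij}}A_q(s)=0$. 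Since $\pi_{comp}$ does not depend on $q$, this gives $C_\alpha\ge C_q$ for every $q$ (so $q=\alpha$ is proper), and shows that an arbitrary $q$ is proper if and only if $C_q=C_\alpha$, i.e.\ if and only if~(\ref{eq:riskhazconcordant_discrete}) holds.

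I expect the only genuine obstacle to be the discrete analogue of Lemma~\ref{thm:lem1}: one must handle predictability carefully in discrete time and, above all, accommodate tied event times, which now occur with positive probability and must be built into both the notion of comparability and the compensator computation. Once that is in place, the remainder is a line-by-line transcription of the proof of Theorem~\ref{thm:1} with $\int_0^{\tau_{ij}}\,ds$ replaced by $\sum_{s=1}^{\tau_{ij}}$.
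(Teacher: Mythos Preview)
Your proposal is correct and follows essentially the same approach as the paper: you establish a discrete analogue of Lemma~\ref{thm:lem1} via the Doob decomposition, obtaining the same compensator $\Lambda_{ij}^{conc}(t)=\sum_{s=1}^{t}Y_{ij}(s)[Q_{ij}^1(s)+\tfrac12 Q_{ij}^2(s)]\alpha(s\mid Z_i)$, and then transcribe the $f_q$/$A_q$ argument with sums in place of integrals. If anything, you are more explicit than the paper about predictability of the $Q_{ij}^k(s)$ on $\{Y_{ij}(s)=1\}$ and about why the ``$\le$'' convention for comparability is needed to make the compensator computation go through at ties.
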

The proof of which is similar to that of Theorem \ref{thm:1} and can be found in the Appendix. The decision to treat pairs with tied event times as comparable pulls each concordance score towards 0.5. Also, the scores of different models are pulled closer together, while retaining the same ordering. This is because for such pairs we always have  $N_{ij}^{\text{conc}}(\tau_{ij}) + N_{ji}^{\text{conc}}(\tau_{ij})=1$ and $N_{ij}^{\text{comp}}(\tau_{ij}) + N_{ji}^{\text{comp}}(\tau_{ij})=2$. We prove these statements fully in Appendix \ref{App:Ties}.

\section{Demonstration of incorrect model selection}
\label{sec4}
We now present an experiment to compare concordance indices  produced by different risk scores. The set up is chosen to show that it is possible to favour incorrect models over the true data generating mechanism. Further studies would be needed to show how typical this situation is.

We generate a data set with crossing hazards inspired by the problem discussed by \cite{Mantel1988-qh}. Let a population of 2000 be divided into two groups, with covariate $Z_i = 0$ for those in group 0 and $Z_i = 1$ for those in group 1. The data generating model $M_0$ is specified by the hazard rates
$$ \alpha_{M_0}(t\vert Z_i=0) =  0.5, \quad
\alpha_{M_0}(t\vert Z_i=1) = t$$
There is independent right censoring by  an exponential distribution with rate 0.05 as well as censoring for anyone who survives until $t=1.1$.

Now let there be 3 incorrect models $M_1$, $M_2$, $M_3$, for us to compare to, which are  defined by their hazard rates $\alpha_{M_1}$,
$\alpha_{M_2}$, $\alpha_{M_3}$
as follows: 
$$
\alpha_{M_1}(t\vert Z_i=0) =
      0.5, \quad
\alpha_{M_1}(t\vert Z_i=1) =
    \begin{cases}
      t, & ( t\leq  0.5) \\
      10t, & (0.5<t) 
    \end{cases},
$$
$
\alpha_{M_2}(t\vert Z_i=0) =
      0.25,$ $
\alpha_{M_2}(t\vert Z_i=1) =
      t,
$
$
\alpha_{M_3}(t\vert Z_i=0) =
      0.5,$
$      
\alpha_{M_3}(t\vert Z_i=1) =
      0.5t.
$
The hazard and cumulative hazard rates are shown in in Figure \ref{fig:1}.

We use four different risk scores to calculate concordance indices. Our suggestions of the hazard at time of first event  uses  $q(s|Z)=\alpha(s|Z)$ and  is  denoted by $C_\alpha$.  
Survival at time of first event, the suggestion of \cite{Antolini:2005}, is denoted by $C^{td}$ and uses  $q(t|z)=-S(t|z)$, where $S(t|z)$ is the survivor function at time $t$ for an individual with covariate $z$. 
Survival at fixed times 0.5 and 1.05 are denoted by  $C_{S(0.5)}$ and $C_{S(1.05)}$ and use $q(t|z)=-S(0.5|z)$ and $q(t|z)=-S(1.05|z)$, respectively.  The quantile survival time is denoted by $C_{\mu(s)}$ and uses $q(t|z)=-\inf\{u \text{   s.t   } S(u|z) \geq s\} $.

We generated 100 different data sets and computed the resulting concordance indices as well as, for every concordance index, the frequency with which each  model achieved the highest concordance index. Results are presented in Table \ref{fig:1}.

As anticipated by Theorem \ref{thm:1}, $C_\alpha$ almost always selects the correct model, but is  unable to distinguish between $M_0$ and $M_1$ as both models have risk scores concordant with the hazard rate of the true model $M_0$. The concordance $C^{td}$ consistently selects an incorrect model in this situation. $C_{S(0.5)}$ fails to perform any model selection, giving every model an equal score in every experiment.  $C_{S(1.05)}$ mostly selects an  an incorrect model. Finally, $C_{\mu(0.5)}$ similarly chooses an incorrect model in most iterations, while $C_{\mu(0.75)}$ mostly fails to distinguish between $M_0$ and $M_3$, showing that choosing $\mu(s)$ as the risk score can perform as well as $C_\alpha$, but is dependent on s (the best of which will be unknown). With each iteration there is a small chance that the randomly generated data will result in concordance calculation orderings that do not match the order of the expected concordances. This has resulted in a small number of deviations in model selection from the general trend.


\begin{figure}
\includegraphics[width=\linewidth]{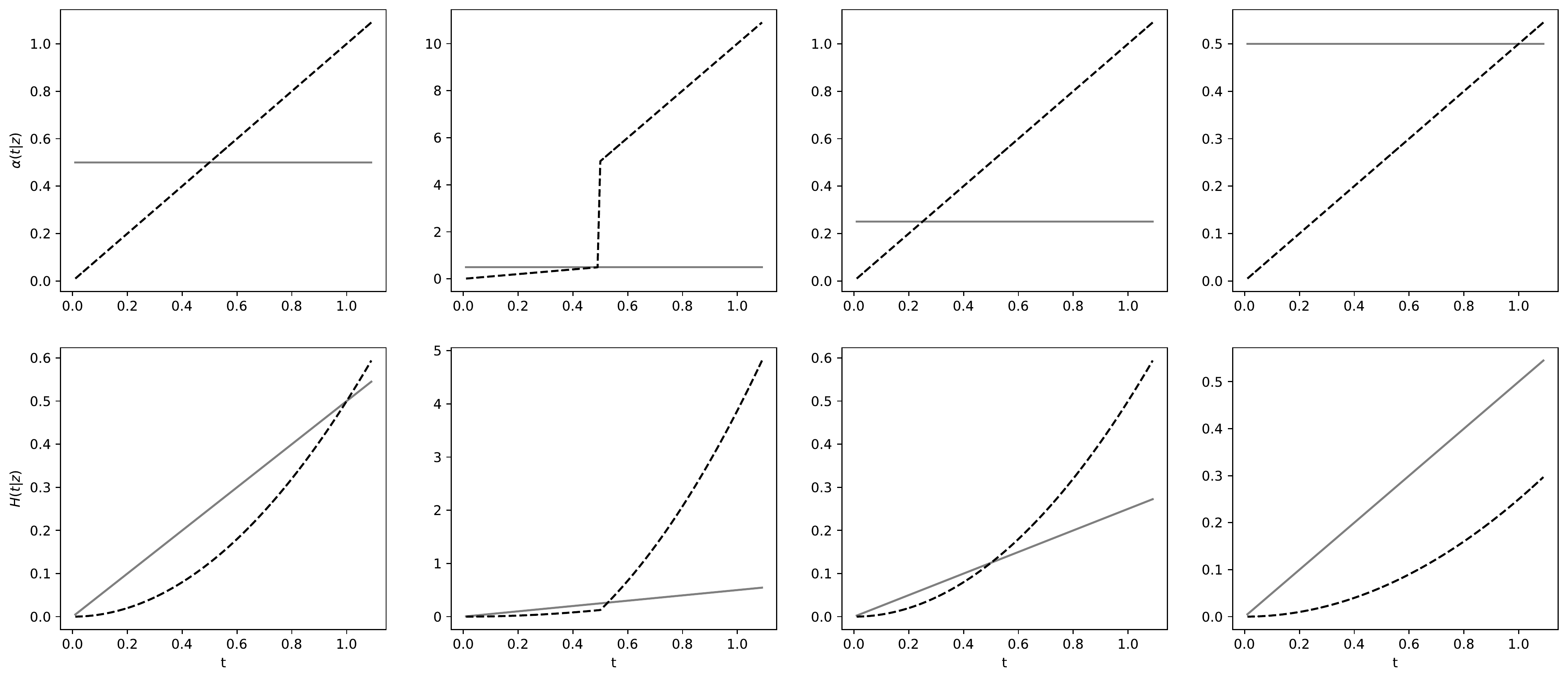}
\caption{Hazard rates (top row) and  cumulative hazard rates (bottom row) of models $M_0, \dots, M_3$ (left to right). Group 0: solid lines; group 1: dotted lines.}
\label{fig:1}
\end{figure}

\begin{table}
    \caption{Simulation from $M_0$ as described in Section 4. Left: Average concordance scores for each model/risk score. Right: Frequency of model selection via the highest risk score 100 replications; tied highest scores counted for all tied models.}
    \label{tab:1}
    \centering
    \begin{tabular}{c c c c c}
    \hline
         & $M_0$ &  $M_1$ & $M_2$ & $M_3$ \\ 
        \hline
          $C_\alpha$ & 0.57 & 0.57 & 0.55 & 0.53  \\
         $C^{td}$ & 0.53 & 0.57 & 0.57 & 0.52   \\
         $C_{S(0.5)}$ & 0.52&  0.52 & 0.52 & 0.52   \\
         $C_{S(1.05)}$ & 0.48 & 0.48 & 0.48 & 0.52  \\
        $C_{\mu(0.5)}$ & 0.48 & 0.48 & 0.48 & 0.52    \\
        $C_{\mu(0.75)}$ & 0.52, & 0.48 & 0.48 & 0.52 \\
        \hline
    \end{tabular}
    \quad
    \begin{tabular}{c c c c c}
    \hline
        & $ M_0$ & $M_1$ & $M_2$ & $M_3$\\
        \hline
        $C_\alpha$&98 & 98 & 2 & 0\\
        $C^{td}$ &0 & 50 & 50 & 0\\
        $C_{S(0.5)}$ &100 & 100 & 100 & 100\\
        $C_{S(1.05)}$ &4 & 4 & 4 & 96\\
        $C_{\mu(0.5)}$&4 & 4 & 4 & 96\\
        $C_{\mu(0.75)}$ & 96 & 4 & 4 & 96\\
        \hline
    \end{tabular}
\end{table}

\section{Comparing Kaplan-Meier Estimates}
\label{sec5}

For a set of right censored survival data, the maximum likelihood estimator over all valid survival distributions is given by the Kaplan-Meier estimator. 
The Kaplan-Meier estimate is useful when simply examining recovery rates and probable event times for groups of individuals, as well as investigating the effectiveness of a treatment. In the latter case, individuals are grouped by treatment, and survival curve estimates for each group are compared using, for example, the log-rank test to establish treatment efficacy. The quality of the fit of such estimates are also commonly evaluated using the C-index. In the following experiment we investigate two further situations, ($M_4, M_5$), with crossing hazard rates between groups, for which Kaplan-Meier estimates will be evaluated by the concordance index using a range of risk scores. 

Let there be two groups of 2000 patients with hazards rates
$$
\alpha_{M_4}(t\vert Z_i=0) =
    \begin{cases}
      6, & ( t\leq  0.1) \\
      1, & (t>0.1)
    \end{cases}, \qquad
\alpha_{M_4}(t\vert Z_i=1) = 1.4
$$
in the first experiment, and as 
$$
\alpha_{M_5}(t\vert Z_i=0) =
    \begin{cases}
      0.5, & ( t\leq  0.9) \\
      10, & (t>0.9)
    \end{cases},\qquad
\alpha_{M_5}(t\vert Z_i=1) =
    \begin{cases}
      2, & ( t\leq  0.9) \\
      1, & ( t> 0.9) 
    \end{cases},
$$
in the second experiment.
The Kaplan-Meier estimates of the survival functions will be calculated for each group, which will then be used to produce risk scores for all event times. The risk score used in $C_\alpha$ requires an estimate of the hazard rate at each event time. One formula for the hazard rate, given a survival function when the survival time has an absolutely continuous distribution function is $\alpha(t) = f(t)/S(t) = \frac{-dS(t)}{dt}/S(t)$ \cite[Example II.4.1]{Andersen:2012}. Therefore, by smoothing and differentiating the Kaplan-Meier estimate of the survival function, we will be able to produce an estimate of the hazard rate. In the following experiment we smooth the Kaplan-Meier survival estimate using a triangular smoothing kernel with bandwidth $b=0.05$. Furthermore, in order to prevent a bias at the beginning and end of the survival curve, we firstly extend $\hat{S}(t)$ by $b$ around t = 0, and secondly we report results right censored $b$ earlier than the true right-censoring time, giving a final right-censor time of 1. For full implementation details refer to Appendix \ref{appendix:code}.

The survival and hazard function estimates are show in Figures \ref{fig:2} and \ref{fig:3}. The hazard rate estimates correctly only cross once in each experiment, near the true hazard crossing times. Concordance results for each risk score are reported in Table \ref{tab:2}. For model $M_4$ we find that all concordance indices, except $C_\alpha$, find that this model has almost no ability to discriminate, scoring only 0.51, whereas $C_\alpha$ scores 0.57, indicating good discrimination. For model $M_5$, $C_\alpha$ finds the model to be even stronger, whereas every other risk score reports the model to be significantly discordant, with scores of 0.44, despite the predicted hazard and survival functions matching the truth well. The estimated survival plot is produced using the lifelines python package \cite{Davidson}, which includes $95\%$ confidence intervals.

This experiment shows a clear shortcoming of calculating the C-index with other risk scores that is not experienced with $C_\alpha$, further justifying its use. Furthermore, $M_4$ may be of special interest as it reflects a treatment setting wherein the treatment group experiences an initial period of higher mortality, followed by a recovered period where they are healthier. We have shown that the other risk scores cannot be relied to recognise strong models in situations where hazard rates cross. These experiments also give a simple method of calculating $C_\alpha$ for the Kaplan-Meier survival function estimate. This method may be less feasible in situations with less data points, as this will result in poorer hazard function estimates.

\begin{figure}%
    \centering
    \subfloat[\centering ]{{\includegraphics[width=7.5cm]{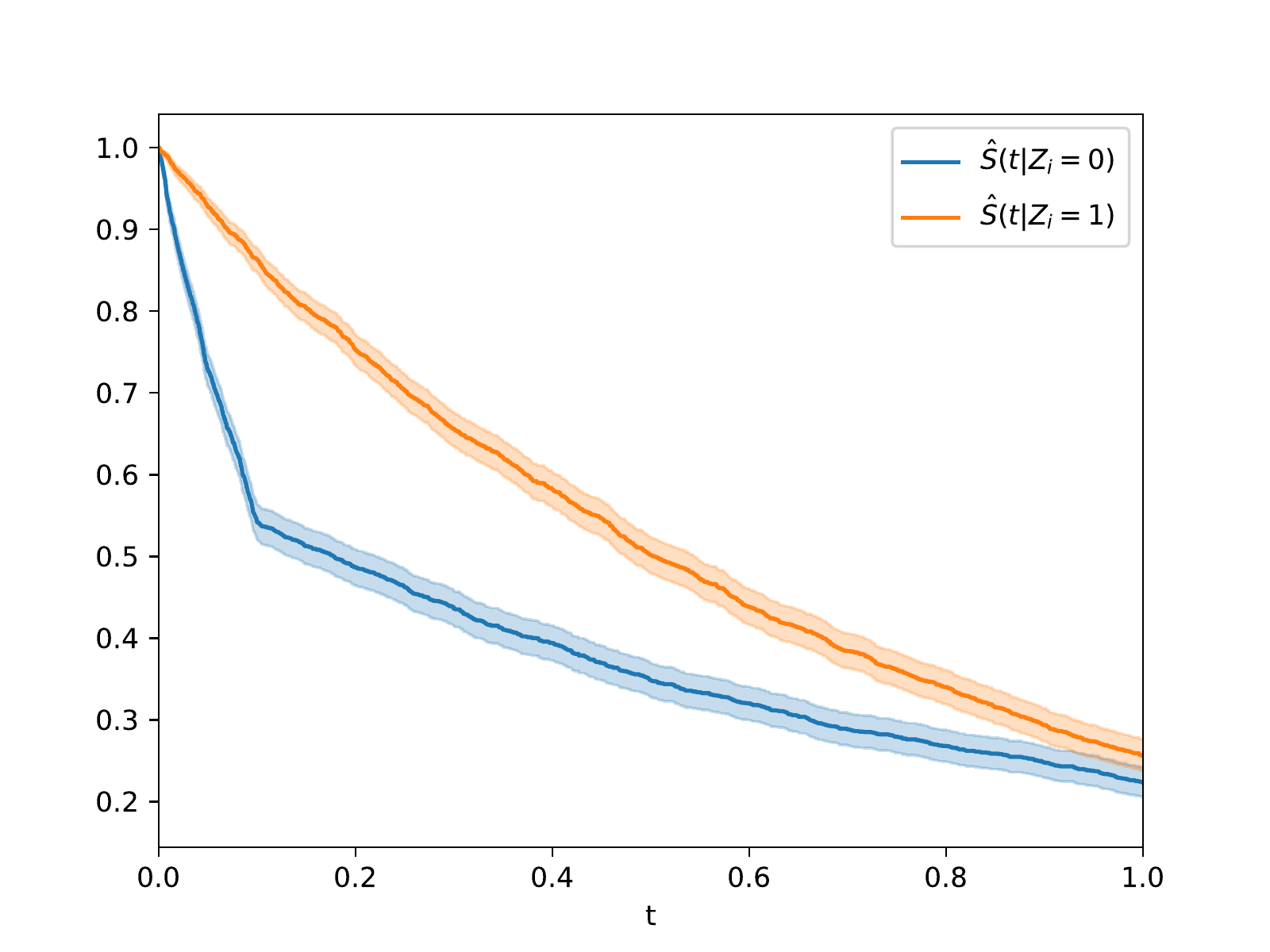} }}%
    \qquad
    \subfloat[\centering ]{{\includegraphics[width=7.5cm]{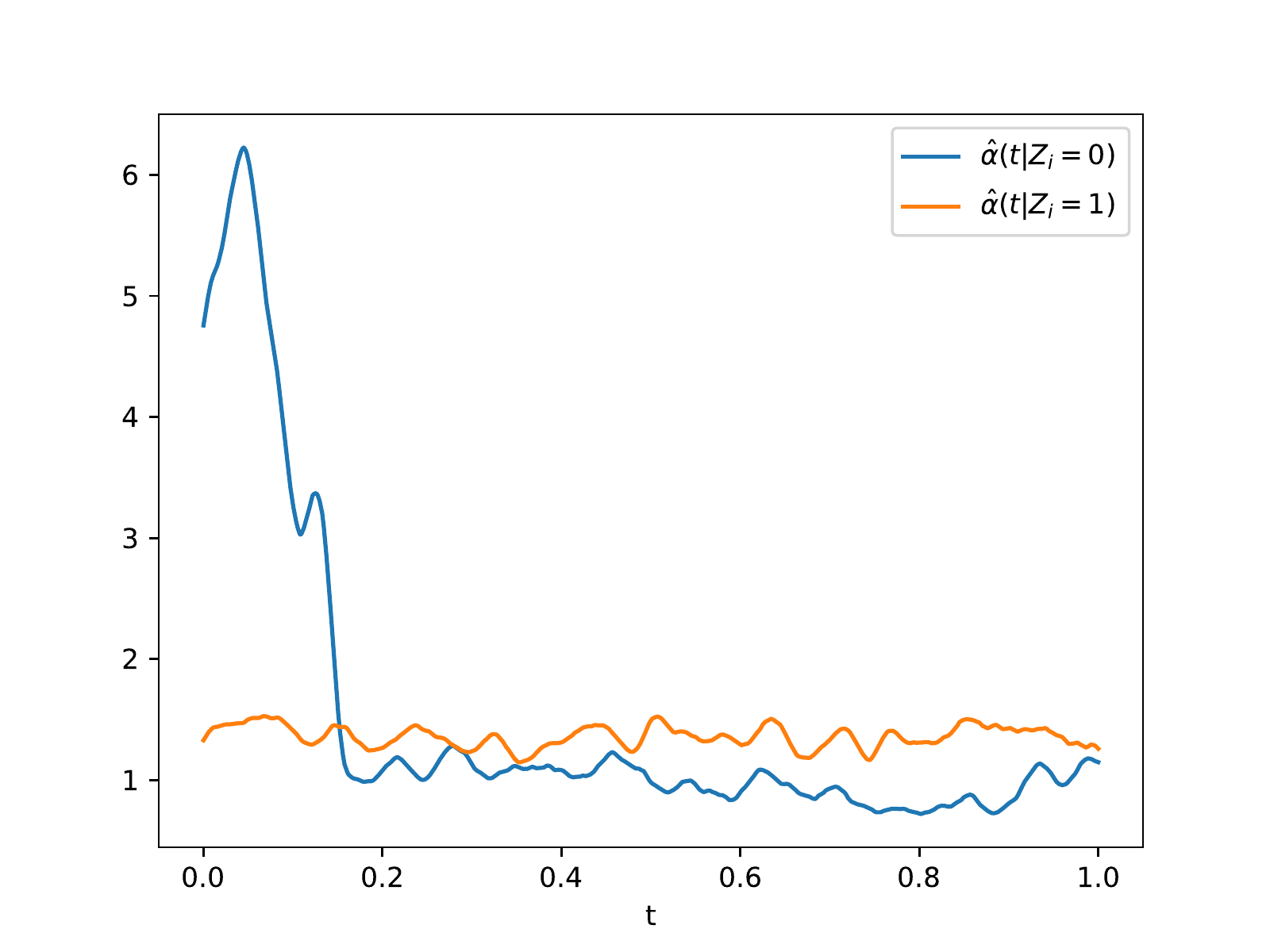} }}%
    \caption{Estimated survival and hazard functions for data from $M_4$.}%
    \label{fig:2}%
\end{figure}

\begin{figure}%
    \centering
    \subfloat[\centering ]{{\includegraphics[width=7.5cm]{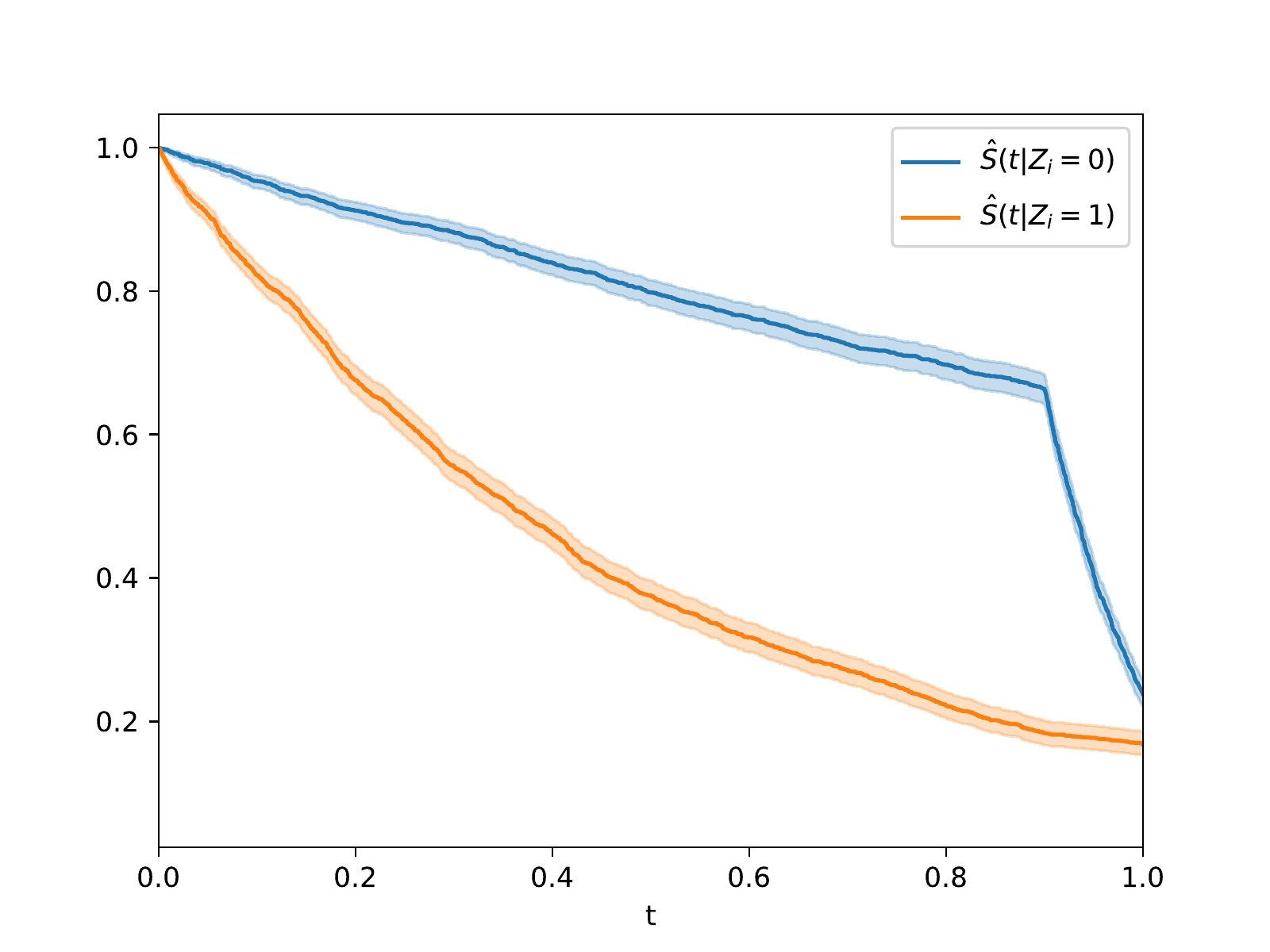} }}%
    \qquad
    \subfloat[\centering ]{{\includegraphics[width=7.5cm]{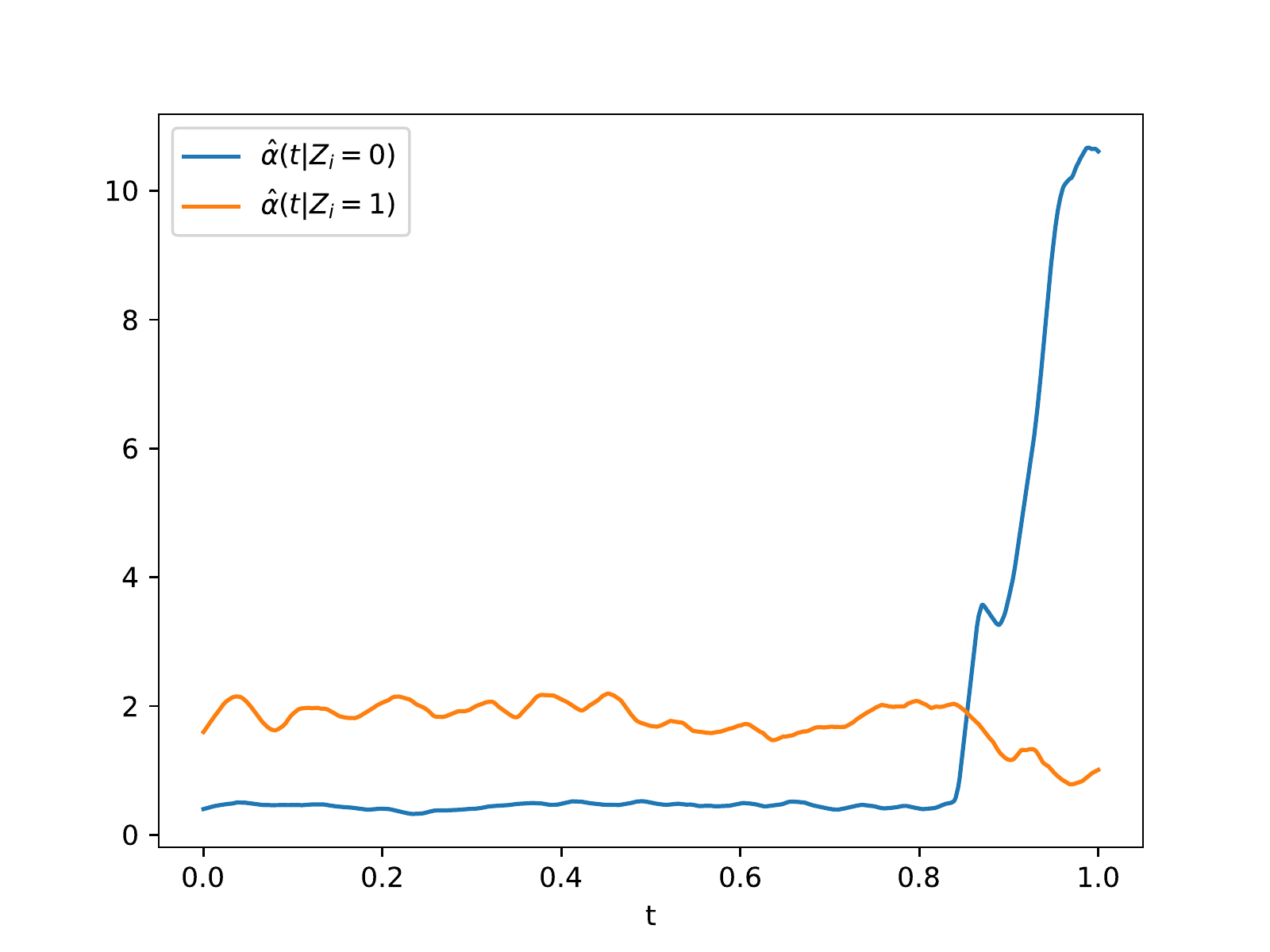} }}%
    \caption{Estimated survival and hazard functions for data generated by $M_5$. }%
    \label{fig:3}%
\end{figure}

\begin{table}
    \caption{ Concordance index scores across range of risk indices for models Kaplan-Meier estimates of models $M_4$ and $M_5$.}
    \label{tab:2}
    \centering
    \begin{tabular}{c c c c c c c}
    \hline
        & $C_\alpha$ & $C^{td}$ & $C_{S(0.5)}$ & $C_{\mu(0.25)}$ & $C_{\mu(0.5)}$ & $C_{\mu(0.75)}$ \\
    \hline
        $M_4$ &0.57 & 0.51 & 0.51 & 0.51 & 0.51 & 0.51 \\
        $M_5$ &0.61 & 0.44 & 0.44 & 0.44 & 0.44 & 0.44 \\
        \hline
    \end{tabular}
\end{table}

\section{Deep Learning}
\label{sec6}
In this section we produce an experiment to test how using $C_{\alpha}$ in the loss function of a deep learning survival model may improve predictions. In the work of \cite{Lee:2018} they present DeepHit, a neural network  that predicts discrete probability mass functions for each $T_i$, $\hat{f}(t\vert Z_i)$. The loss function used to train DeepHit is the sum of two terms, the regular log-likelihood function
\begin{equation}
    L_0 = \sum_{i, D_i=1}\log(\hat{f}(t \vert Z_i))+ \sum_{i, D_i=0}\log(1-\hat{F}(t\vert Z_i)),
\end{equation}
where $\hat{F}(t\vert Z_i) = \sum_{s\leq t}\hat{f}(s\vert Z_i)$. As well as a second term that is designed to encourage the minimisation of $C^{td}$
\begin{equation}
    L^{td} = \sum_{i\neq j} A_{i,j}\cdot \eta (\hat{F}(T_i\vert Z_i), \hat{F}(T_i\vert Z_j)),
\end{equation}
where $A_{i,j}=\mathbf{1}(T_i<T_j, D_i=1)$ indicates which ordering of each pair (i,j) has first experienced the event first (if at all) and $\eta(x,y)=\exp(\frac{-(x-y)}{\sigma})$ for some fixed hyper-parameter $\sigma >0$. It seems this function is chosen instead of the previous concordance equation as it is differentiable and can therefore be optimised for using gradient descent. However, we argue that a loss built to mimic the C-Index should return scores for a pair in as similar a way as possible and  decided to use the sigmoid function $\sigma(x) = \frac{e^{t}}{e^{t}+1}$ instead of the exponential, giving $\eta(x,y)=\sigma(\frac{-(x-y)}{\sigma})$, so that loss incurred by each pair is limited to $[0,1]$ with risk score draws returning 0.5. Another change made was the calculation of C-indices in validation and testing was altered to match that given in Section \ref{sec3}, treating pairs with equal observation time as comparable.

In our experiment we will target $C_\alpha$ by adapting $L^{td}$ to instead evaluate the ordering of the predicted hazard rate $\hat{\alpha}$ at the first event time for each pair.

\begin{equation}
    L_{\alpha} = \sum_{i\neq j} A_{i,j}\cdot \eta (\hat{\alpha}(T_i\vert Z_i), \hat{\alpha}(T_i\vert Z_j))
\end{equation}

To test this model we again produce some synthetic survival data, modelling discrete hazard rates as  DeepHit produces prediction for discrete event times. We let there be two groups of 10,000 patients with  discrete hazard rates

$$
\alpha_{M_6}(t\vert Z_{i,1}=0) =    \begin{cases}
      0.05, & ( t = 1,\dots, 5) \\
      0.5, & (t = 6,\dots, 10)
    \end{cases}, \qquad
\alpha_{M_6}(t\vert Z_{i,1}=1) =
    \begin{cases}
      0.5, & ( t = 1,\dots, 5) \\
      0.05, & ( t = 6,\dots 10) 
    \end{cases},
$$
Noise variables are included alongside the true covariates. For each patient, independent covariates $Z_{i,k}\sim Bernoulli(0.5)$ for $k=2,\dots, 10$ are included. 

We train on 80$\%$ of this data, validate during training on 4$\%$ and test with the remaining 16$\%$. The validation is accomplished by calculating $C^{td}$ for the model trained with the $L^{td}$ loss term and $C_{\alpha}$ for the model trained with $L_{\alpha}$ (since these are the metric each model is targeting)  and training stops if these scores do not improve for 5 training epochs.

\begin{figure}
\centering
\subfloat[]{
  \includegraphics[width=7.5cm]{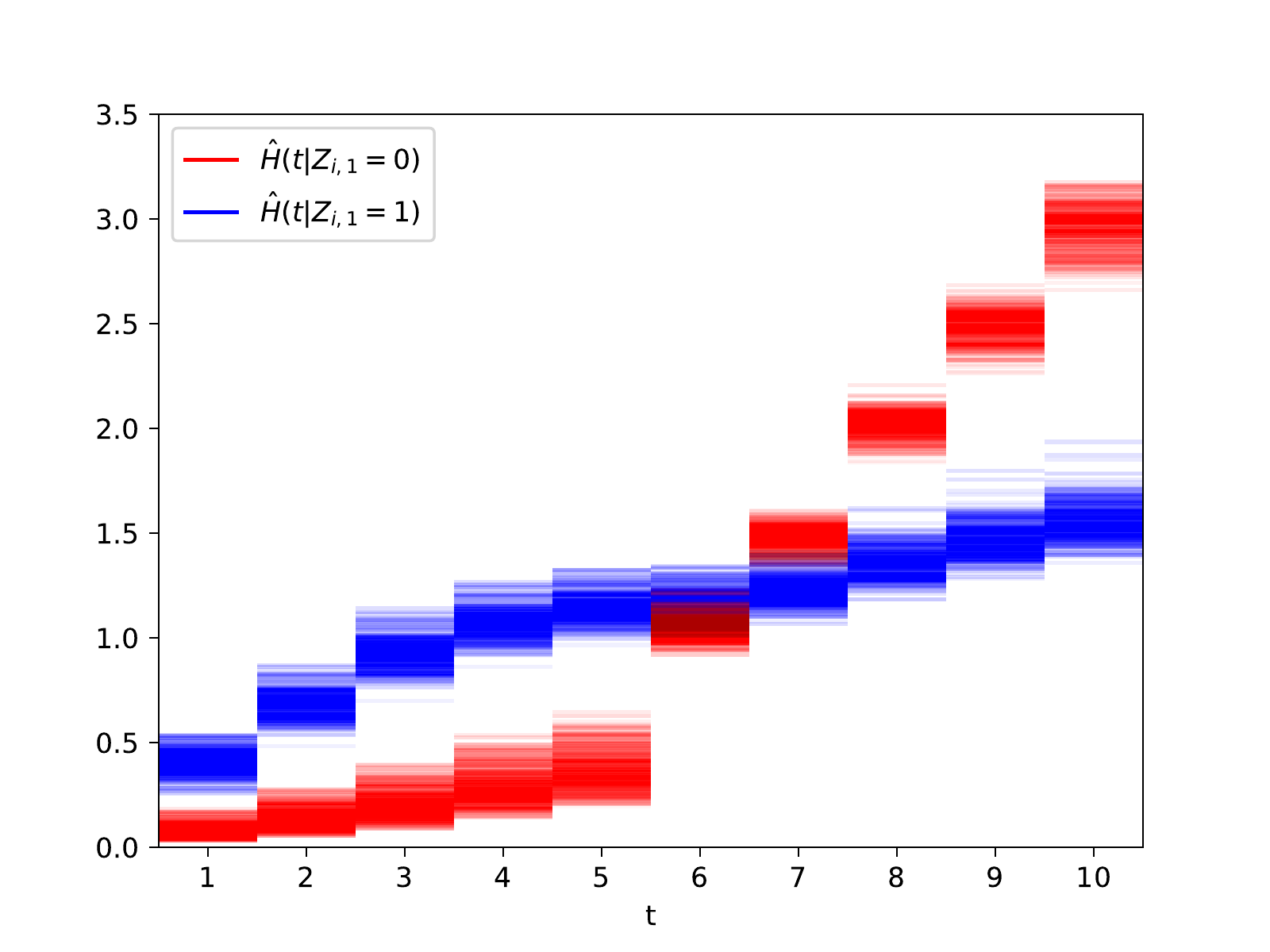}
}
\subfloat[]{
  \includegraphics[width=7.5cm]{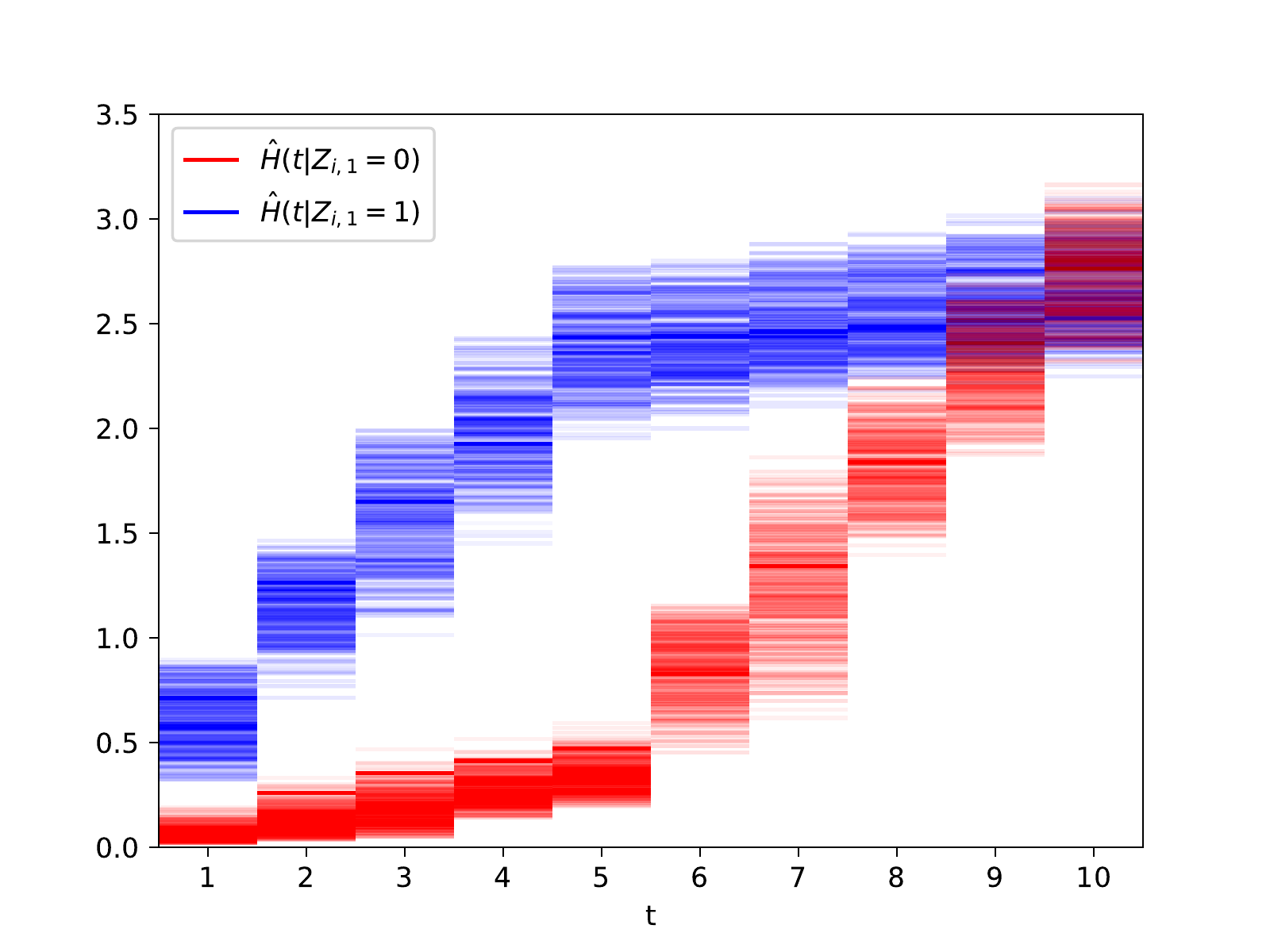}
}
\caption{DeepHit predicted discrete cumulative hazard using $L^{td}$ (a) and $L_{\alpha}$ (b)}
\label{fig:4}%
\end{figure}

The models perform admirably when evaluating them according to the concordance index corresponding to the loss functions used. The model trained with $L_{td}$ loss scores $C_{td}=0.69$, while the model trained with $L_{\alpha}$ loss scores $C_{\alpha}=0.69$ on the testing data. However, comparison of concordance scores in this setting is non-informative as each model is designed to perform well for their respective metrics. Since we know the form of the true hazard rates for each individual we can compare the predictions directly with the truth. In Figure \ref{fig:4} (a) and (b) all discrete cumulative hazard predictions are displayed. 

DeepHit trained with $L_{\alpha}$ performs as expected, predicting the cumulative hazards correctly, with some noise due to the inclusion of the nine noise variables. Conversely, when training with $L^{td}$ the cumulative hazards rates for the second group cross below those of the first group at $t=7$, several steps earlier than it should. 

This error can be explained by the following; suppose the network gave the true hazard rates as the predictions. The loss function based on $C^{td}$ would then evaluate across all pairs of individuals. If we consider the pairs (i,j) such that $Z_{i,1}=0, Z_{j,1} = 1$, $6\leq T_i\leq T_j$ and $D_i=1$. Despite individual i actually having a higher hazard rate at $T_i$, the cumulative hazard for j would be higher, so $L^{td}$ considers the pair non-concordant and would incur a loss. Such spurious losses would then in future training cycles encourage the neural network to re-weight in such a way that the cumulative hazard of j would be lower at $T_i$, while the cumulative hazard for i would be higher.

This experiment is designed to prominently display the consequences of choosing to target $C_{\alpha}$ instead of $C^{td}$ in the loss function of deep learning survival models. Our result shows that if the underlying process generating the data does have crossing hazard rates between individuals, then this trend may not be effectively learned by using $L_{td}$ instead of $L_{\alpha}$. In a real data setting it may be expected that the hazard rate crossing may not be as dramatic, but without knowing the true hazard functions, using $L_{\alpha}$ may be preferred.

\section{Discussion}

In this paper we have explored the limitations of risk scores used in the C-index, specifically when it is used to assess survival models that are capable of producing crossing hazard rate predictions. Previous work has approached solving similar problems, for example Antolini's C-index focused on models with crossing survival curves. In our work we show that even these efforts can still suffer from rewarding incorrect survival models more highly than the truth. The main contribution of this paper is the development of the proper concordance index $C_{\alpha}$, which we prove has the desirable property of asymptotically not scoring a prediction more highly than the true survival distribution. We demonstrated the advantages of $C_{\alpha}$ in two settings, first as the metric of success for Kaplan-Meier models, and secondly its various uses for deep learning models. The deep learning models trained with loss functions incorporating $C_{\alpha}$ outperformed those targeting $C^{td}$ in a situation with crossing hazard rates.  This experiment also further demonstrated the advantages of $C_{\alpha}$ as success metric, as well as it's use as a validation metric during training. 

\newpage
\appendix

\section{Appendix}
\subsection{Proofs}
\label{appendix}

\begin{proof}[Proof of Lemma \ref{thm:lem1}]
Let $i\in \mathbb{N}$.
Consider the counting process $N_{i}(t)=D_i \mathbb{I}(T_i\leq t)$, with a unique decomposition $N_i(t)=\Lambda_i(t)+M_i(t)$ into a  compensator
$\Lambda_i(t)=\int_0^t\alpha(s|Z_i)Y_i(s)ds$ and a finite variation local martingale $M_i$ with respect to $(\mathcal{F}_t)$. 

With $j\in \mathbb{N},$ $j\neq i$, let $N_{i}^{\tau_{ij}}$ be $N_i$ stopped at $\tau_{ij}$, i.e.,  $N_{i}^{\tau_{ij}}(t)=N_{i}(t\wedge \tau_{ij})$.
Since a finite variation local martingale stopped at a stopping time is also a finite variation local martingale $M^{\tau_{ij}}_{i}(t):=M_i(t\wedge \tau_{ij})$ is a finite variation local martingale. By uniqueness of decomposition \citep[Theorem III.16]{Protter:2010}, the compensator of $N^{\tau_{ij}}_{i}(t)$ is therefore 
\begin{align*}
 \Lambda_i^{\tau_{ij}}(t)&
                    = \int_0^{t\wedge \tau_{ij}}\alpha(s|Z_i)Y_i(s)ds
                    = \int_0^t\alpha(s|Z_i)Y_{ij}(s)ds,
\end{align*}
where $Y_{ij}(t) = \mathbb{I}(t \leq \tau_{ij} )$.
Letting  $Q_{ij}^1(t)=\mathbb{I}[q(t\wedge\tau_{ij}|Z_i)>q(t\wedge\tau_{ij}|Z_j)]$, we can write $N_{ij}^{conc,1}$ as
\begin{align*}
    N_{ij}^{conc,1}(t) &= \int_0^t Q_{ij}^1(s)dN_{i}^{\tau_{ij}}(s)= \Lambda_{ij}^{conc,1}(t) + M_{ij}^{conc,1}(t),
\end{align*}
where $\Lambda_{ij}^{conc,1}(t):= \int_0^t Q_{ij}^1(s)d\Lambda^{\tau_{ij}}_{i}(s) $ and $M_{ij}^{conc,1}(t) := \int_0^t Q_{ij}^1(s)dM^{\tau_{ij}}_{i}(s)$.

 $M_{ij}^{conc,1}(t)$ is a local martingale with respect to $(\mathcal{F}_t)$ as $Q_{ij}^1$ is predictable and bounded \cite[Theorem II.3.1]{Andersen:2012}. Thus, again by uniqueness of decomposition, the compensator of $N_{ij}^{conc,1}(t)$ is given by $\Lambda_{ij}^{conc,1}(t)$, which can be rewritten as
\begin{align*}
    \Lambda^{conc,1}_{ij}(t) &= \int_0^t\alpha(s|Z_i)Q^1_{ij}(s)Y_{ij}(s)ds,
\end{align*}
implying that the intensity of $N_{ij}^{conc,1}(t)$ is  
$\lambda_{ij}^{conc,1}(t) = \alpha(t|Z_i)Q_{ij}^1(t)Y_{ij}(t)$. By similar arguments we can show that the
process $N_{ij}^{conc,2}(t)$ has a decomposition into local martingales and compensators with intensity process 
\begin{align*}
    \lambda_{ij}^{conc,2}(t) &= \alpha(t|Z_i)Q_{ij}^2(t)Y_{ij}(t),
\end{align*}
where $Q_{ij}^2(t) = \mathbb{I}[q(t\wedge\tau_{ij}|Z_i)=q(t\wedge\tau_{ij}|Z_j)]$. 
Now we can decompose $N_{ij}^{conc}$ as 
\begin{align*}
    N_{ij}^{conc}(t)&=N_{ij}^{conc,1}(t) +N_{ij}^{conc,2}(t)/2= \Lambda_{ij}^{conc,1}(t) +\Lambda_{ij}^{conc,2}(t) /2 +M_{ij}^{conc,1}(t) + M_{ij}^{conc,2}(t)/2.
\end{align*}
Since the property of a process being a local martingale is closed under addition and scalar multiplication, the final two terms, which we call $M_{ij}(t)$,  form a local martingale. Therefore, the first  two terms are the compensator of $N_{ij}^{conc}(t)$, which simplify to 
\begin{align*}
    \Lambda_{ij}^{conc} &=  \int_0^t\alpha(s|Z_i)[Q_{ij}^1(s)+0.5 Q_{ij}^2(s)]Y_{ij}(s)ds.
\end{align*}

To show that $M_{ij}^{conc}$ is a martingale, and not just a local martingale, we use Theorem I.51 of \citet{Protter:2010}, which requires us to show
\begin{equation*}
    E[\sup_{s\leq t}|M_{ij}^{conc}(s)|]<\infty \quad \forall t\geq 0.
\end{equation*}
First we write
\begin{align*}
    E[\sup_{s\leq t}|M_{ij}^{conc}(s)|] &\leq E[\sup_{s\leq t}|N_{ij}^{conc}(t)|]  + E[\sup_{s\leq t}|\Lambda_{ij}^{conc}(t)|]   
    \leq 1 + E[\sup_{s\leq t}|\Lambda_{ij}^{conc}(t)|],
\end{align*}
and then we bound the second term 
\begin{align*}
    \sup_{s\leq t}|\Lambda_{ij}^{conc}(t)| &= \int_0^t(\alpha(s|Z_i)Q_{ij}^1(s) +
    \frac{\alpha(s|Z_i)Q_{ij}^2(s)}{2})Y_{ij}(s)ds\\
    &\leq \frac{3}{2}\int_0^t\alpha(s|Z_i)Y_{ij}(s)ds
    \leq \frac{3}{2}\int_0^{X_i}\alpha_i(s|Z_i)ds = \frac{3}{2}H(X_i|Z_i), 
\end{align*}
where $H(t|Z_i)=\int_0^t \alpha(s|Z_i) ds$ is the $i$th individual's integrated hazard rate.
 Suppose $Y$ is a random variable with integrated hazard rate $H$ and cumulative distribution function $F$. 
Then $E[H(Y)]=E[-\log(F(Y))]=-\int \log(F(y))dF(y)=-\int_0^1 \log(u)du=1$.
Hence, 
\begin{equation*}
    E[\sup_{s\leq t}|M_{ij}^{conc}(s)|] \leq 1 + \frac{3}{2}E[H(t\vert Z_i)]  = 5/2 <\infty.
\end{equation*}
\ 
\end{proof}

\begin{proof}[Proof of Theorem \ref{thm:2}]

For $t \in\mathbb{N}^+$ we define $N_{ij}^{comp}(t), N_{ij}^{conc,1}(t),  N_{ij}^{conc,2}(t)$ and $N_{ij}^{conc}(t)$ as in Section \ref{sec2}, with the filtration defined as 
$\mathcal{F}_t=\sigma(Z_i, \mathbb{I}(T_i\leq s), \mathbb{I}(T_i\leq s, D_i=1), i\in \mathbb{N}, s=1,\dots, t)$ and $\mathcal{F}_0 = \sigma(Z_i, i \in \mathbb{N})$
we can derive compensators of the $N_{ij}^{conc}(t)$ with respect to $\mathcal{F}_t$.

Let
\begin{equation*}
    M_{ij}(t) = N_{ij}^{conc}(t)-\Lambda_{ij}^{conc}(t)
\end{equation*}
where $\Lambda_{ij}^{conc}(t)=\sum_{s=1}^{t}Y_{ij}(s)[\mathbb{I}(q(s\vert Z_i) > q(s\vert Z_j)) + \frac{1}{2}\mathbb{I}(q(s\vert Z_i)=q(s\vert Z_j))]\alpha (s\vert Z_i)$ and $Y_{ij}(s) = \mathbb{I}(\tau_{ij} \geq s)$.

$M_{ij}(t)$ defines a discrete-time martingale with respect to $(\mathcal{F}_t)$ because
$$
E[N_{i,j}^{conc}(t)-N_{i,j}^{conc}(t-1) = 1 \vert \mathcal{F}_{t-1}, \tau_{i,j} < t] = 0
$$
and 
\begin{align*}
    &E[N_{i,j}^{conc}(t)-N_{i,j}^{conc}(t-1) = 1 \vert \mathcal{F}_{t-1}, \tau_{i,j} \geq t]= \\
    &P(\Delta N_{i,j}^{conc,1}(t) = 1 \vert \mathcal{F}_{t-1}, \tau_{i,j} \geq t) + P(\Delta N_{i,j}^{conc,2}(t) = 1 \vert \mathcal{F}_{t-1}, \tau_{i,j} \geq t)=  \\
    &\mathbb{I}(q(t\vert Z_i) > q(t\vert Z_j))\alpha(t\vert Z_i) + \mathbb{I}(q(t\vert Z_i) = q(t\vert Z_j))\alpha(t\vert Z_i) 
\end{align*}

Therefore, the compensator of $N_{ij}^{conc}(t)$ is $\Lambda_{ij}^{conc}(t)$. The proof from this point on is the same as that for Theorem \ref{thm:1} with integrals replaced by sums.

\end{proof}

\subsection{Effect of Tie Inclusion}
\label{App:Ties}
Suppose we have discrete event data $\{T_i, D_i\}_{i=1}^{n}$ for which we have potential models $M_1$ and $M_2$, with corresponding discrete hazard rates $q_1(t\vert Z_i)$ and $q_2(t \vert Z_i)$.

Let 
\begin{align*}
a &= \sum^n_{i=1}\sum^n_{j=1;j\neq i}\{\mathbb{I}[D_i=1, T_i< T_j, q_1(T_i|Z_i)>q_1(T_i|Z_j)] +\frac{1}{2}\mathbb{I}[D_i=1,T_i< T_j,q_1(T_i|Z_i)=q_1(T_i|Z_j)]\},\\
b &= \sum^n_{i=1}\sum^n_{j=1;j\neq i}\{\mathbb{I}[D_i=1, T_i< T_j, q_2(T_i|Z_i)>q_2(T_i|Z_j)] +\frac{1}{2}\mathbb{I}[D_i=1,T_i< T_j,q_2(T_i|Z_i)=q_2(T_i|Z_j)]\},\\
c &= \sum^n_{i=1}\sum^n_{j=1;j\neq i}\mathbb{I}(D_i=1,T_i< T_j)\\
\Tilde{c} &= \sum^n_{i=1}\sum^n_{j=1;j\neq i}\mathbb{I}(D_i=1,T_i= T_j),\\
w &= \frac{c}{c+2\Tilde{c}}.
\end{align*}
Suppose that $c>0$ and $\Tilde{c}>0$. With these definitions we have $c^n_{q_1}= \frac{a}{c}$, $c^n_{q_2}= \frac{b}{c}$ if ties are not included and $c^n_{q_1}=\frac{a+\Tilde{c}}{c+2\Tilde{c}}=w\frac{a}{c}+(1-w)\frac{1}{2}$, $c^n_{q_2}=\frac{b+\Tilde{c}}{c+2\Tilde{c}}=w\frac{b}{c}+(1-w)\frac{1}{2}$ if they are.  Hence ordering is the same in either case. We also have
$$
\left|\frac{a+\Tilde{c}}{c + 2\Tilde{c}}-\frac{1}{2}\right| = w\left|\frac{a}{c}-\frac{1}{2}\right| < \left|\frac{a}{c}-\frac{1}{2}\right|,
$$
showing that the inclusion of ties pulls $c_{q}^n$ closer to 0.5. Finally, the inclusion of ties pulls the estimates $c^n_{q_1}$ and $c^n_{q_2}$ closer together as
$$
\left|\frac{a+\Tilde{c}}{c + 2\Tilde{c}}-\frac{b+\Tilde{c}}{c+2\Tilde{c}}\right| = 
\left| w\frac{a}{c}+(1-w)\frac{1}{2}-w\frac{b}{c}-(1-w)\frac{1}{2}  \right|=
w\left| \frac{a}{c}-\frac{b}{c} \right|<
\left| \frac{a}{c}-\frac{b}{c}\right|.
$$

\subsection{Code}
\label{appendix:code}
Code for all experiments reported in this document can be found at \newline https://github.com/tmatcham/CrossingHazardConcordance

\bigskip

\if1\blind
{
\section*{Funding}
This article presents independent research supported by the National Institute for Health Research (NIHR) under the Applied Health Research (ARC) programme for Northwest London. The views expressed in this publication are those of the author(s) and not necessarily those of the NHS, the NIHR or the Department of Health.
TM was supported by the EPSRC Centre for Doctoral
Training in Modern Statistics and Statistical Machine Learning (EP/S023151/1).
} \fi

\section{Conflict of Interest Statement}
The authors report there are no competing interests to declare.

\bibliographystyle{chicago}

\bibliography{refs}
\end{document}